\def\eqref#1{equation~\ref{#1}}
\def\Eqref#1{Equation~\ref{#1}}
\def\1{\bm{1}}
\def\vh{{\bm{h}}}
\def\vm{{\bm{m}}}
\def\vx{{\bm{x}}}
\def\mA{{\bm{A}}}
\def\mB{{\bm{B}}}
\def\mE{{\bm{E}}}
\def\mH{{\bm{H}}}
\def\mI{{\bm{I}}}
\def\mM{{\bm{M}}}
\def\mQ{{\bm{Q}}}
\def\mX{{\bm{X}}}
\DeclareMathAlphabet{\mathsfit}{\encodingdefault}{\sfdefault}{m}{sl}
\SetMathAlphabet{\mathsfit}{bold}{\encodingdefault}{\sfdefault}{bx}{n}
\def\gG{{\mathcal{G}}}
\newcommand{\R}{\mathbb{R}}
\definecolor{antiquebrass}{rgb}{0.8, 0.58, 0.46}
\definecolor{antiquefuchsia}{rgb}{0.57, 0.36, 0.51}
\definecolor{citecol}{HTML}{6F130C}
\definecolor{tableofcontent}{HTML}{1F4A83}
\definecolor{urlcol}{HTML}{2470D8}
\newcommand{\eg}{\textit{e}.\textit{g}., }
\newcommand{\ie}{\textit{i}.\textit{e}., }
\newcounter{ygincomm}
\definecolor{chromeyellow}{rgb}{1.0, 0.65, 0.0}
\definecolor{mygreen}{rgb}{0.1, 0.8, 0.3}
\definecolor{chromeyellow}{rgb}{1.0, 0.65, 0.0}
\newcounter{bxincomm}
\definecolor{aqua}{rgb}{0.00,0.67,0.80}
\newcounter{todocomm}
\newtheorem{prop}{Propsition}
\newcommand{\ddif}{\textsc{GraDe-IF}}
\title{Graph Denoising Diffusion for Inverse Protein Folding}
\author{
  Kai Yi  $^*$\\
  University of New South Wales \\
  \texttt{kai.yi@unsw.edu.au}
  \And
  Bingxin Zhou \thanks{equal contribution.} \\
  Shanghai Jiao Tong University \\
  \texttt{bingxin.zhou@sjtu.edu.cn} \\
  \And 
  Yiqing Shen\\
  Johns Hopkins University\\
  \texttt{yshen92@jhu.edu}\\
  \\
  \And 
  Pietro Li\`{o} \\
  University of Cambridge\\
  \texttt{\rm pl219@cam.ac.uk}\\
  \And
  Yu Guang Wang \\
  Shanghai Jiao Tong University \\
  University of New South Wales\\
  \texttt{yuguang.wang@sjtu.edu.cn} \\
}
\begin{document}

\maketitle

\begin{abstract}
  Inverse protein folding is challenging due to its inherent one-to-many mapping characteristic, where numerous possible amino acid sequences can fold into a single, identical protein backbone. This task involves not only identifying viable sequences but also representing the sheer diversity of potential solutions. 
  However, existing discriminative models, such as transformer-based auto-regressive models, struggle to encapsulate the diverse range of plausible solutions. 
  In contrast, diffusion probabilistic models, as an emerging genre of generative approaches, offer the potential to generate a diverse set of sequence candidates for determined protein backbones. We propose a novel graph denoising diffusion model for inverse protein folding, where a given protein backbone guides the diffusion process on the corresponding amino acid residue types. The model infers the joint distribution of amino acids conditioned on the nodes' physiochemical properties and local environment. Moreover, we utilize amino acid replacement matrices for the diffusion forward process, encoding the biologically meaningful prior knowledge of amino acids from their spatial and sequential neighbors as well as themselves, which reduces the sampling space of the generative process. Our model achieves state-of-the-art performance over a set of popular baseline methods in sequence recovery and 
  exhibits great potential in generating diverse protein sequences for a determined protein backbone structure. The code is available on \url{https://github.com/ykiiiiii/GraDe_IF}. 
\end{abstract}

\section{Introduction}
Inverse protein folding, or inverse folding, aims to predict feasible amino acid (AA) sequences that can fold into a specified 3D protein structure \cite{khoury2014protein}. 
The results from inverse folding can facilitate the design of novel proteins with desired structural and functional characteristics. 
These proteins can serve numerous applications, ranging from targeted drug delivery to enzyme design for both academic and industrial purposes \cite{madani2023progen,pearce2021deep,sledz2018protein,zhou2023conditional}. In this paper, we develop a diffusion model tailored for graph node denoising to obtain new AA sequences given a protein backbone.

\begin{figure}[!t]
    \centering
    \includegraphics[width=\textwidth]{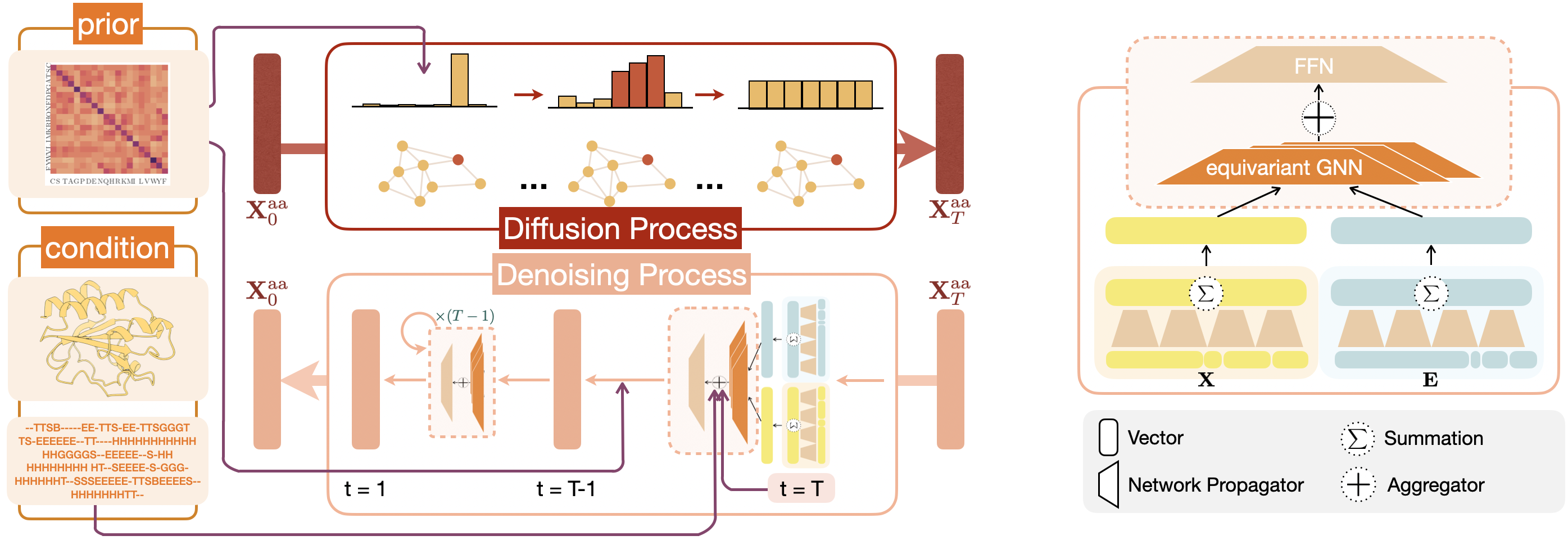}
    \caption{Overview of \ddif. In the diffusion process, the original amino acid is stochastically transitioned to other amino acids, leveraging BLOSUM with varied temperatures as the transition kernel. During the denoising generation phase, initial node features are randomly sampled across the 20 amino acids with a uniform distribution. This is followed by a gradual denoising process, conditional on the graph structure and protein secondary structure at different time points. We employ a roto-translation equivariant graph neural network as the denoising network.
    }
    \vspace{-5mm}
    \label{fig:framework}
\end{figure}

Despite its importance, inverse folding remains challenging due to the immense sequence space to explore, coupled with the complexity of protein folding.
On top of energy-based physical reasoning of a protein’s folded state \cite{alford2017rosetta}, recent advancements in deep learning yield significant progress in learning the mapping from protein structures to AA sequences directly. 
For example, discriminative models formulate this problem as the prediction of the most likely sequence for a given structure via Transformer-based models \cite{ahmed2021prottrans,hsu2022esmif1,lin2023esm2,rao2021msa}. 
However, they have struggled to accurately capture the one-to-many mapping from the protein structure to non-unique AA sequences. 

Due to their powerful learning ability, \textit{diffusion probabilistic models} have gained increasing attention. They are capable of generating a diverse range of molecule outputs from a fixed set of conditions given the inherent stochastic nature. For example, Torsion Diffusion \cite{jing2022torsional} learns the distribution of torsion angles of heavy atoms to simulate conformations for small molecules. 
Concurrently, \textsc{SMCDiff} \cite{trippe2023diffusion} enhances protein folding tasks by learning the stable scaffold distribution supporting a target motif with diffusion. 
Similarly, \textsc{DiffDock} \cite{corso2023diffdock} adopts a generative approach to protein-ligand docking, creating a range of possible ligand binding poses for a target pocket structure. 

Despite the widespread use of diffusion models, their comprehensive potential within the context of protein inverse folding remains relatively unexplored. 
Current methods in sequence design are primarily anchored in language models, encompassing \textit{Masked Language Models} (MLMs) \cite{madani2023progen,lin2023esm2} and \textit{autoregressive generative models} \cite{hsu2022esmif1,meier2021esm1v,nijkamp2022progen2}. 
By tokenizing AAs, MLMs formulate the sequence generation tasks as masked token enrichment. 
These models usually operate by drawing an initial sequence with a certain number of tokens masked as a specific schedule and then learning to predict the masked tokens from the given context. Intriguingly, this procedure can be viewed as a discrete diffusion-absorbing model when trained by a parameterized objective. Autoregressive models, conversely, can be perceived as deterministic diffusion processes \cite{austin2021structured}. It induces conditional distribution to each token, but the overall dependency along the entire AA sequence is recast via an independently-executed diffusion process. 

On the contrary, diffusion probabilistic models employ an iterative prediction methodology that generates less noisy samples and demonstrates potential in capturing the diversity inherent in real data distributions. This unique characteristic further underscores the promising role diffusion models could play in advancing the field of protein sequence design.
To bridge the gap, we make the first attempt at a diffusion model for inverse folding. We model the inverse problem as a denoising problem where the randomly assigned AA types in a protein (backbone) graph is recovered to the wild type. The 
protein graph which contains the spatial and biochemical information of all AAs is represented by equivariant graph neural networks, and diffusion process takes places on graph nodes. In real inverse folding tasks, the proposed model achieves SOTA recovery rate, improve  4.2\% and 5.4\% on recovery rate for single-chain proteins and short sequences, respectively,
, especially for conserved region which has a biologically significance. Moreover, the predicted structure of generated sequence is identical to the structure of native sequence.

The preservation of the desired functionalities is achieved by innovatively conditioning the model on both secondary and third structures in the form of residue graphs and corresponding node features.
The major contributions of this paper are three-fold. Firstly, we propose \ddif, a diffusion model backed by roto-translation equivariant graph neural network for inverse folding. It stands out from its counterparts for its ability to produce a wide array of diverse sequence candidates. Secondly, as a departure from conventional uniform noise in discrete diffusion models, we encode the prior knowledge of the response of AAs to evolutionary pressures by the utilization of \textit{Blocks Substitution Matrix} as the translation kernel. Moreover, to accelerate the sampling process, we adopt Denoising Diffusion Implicit Model (DDIM) from its original continuous form to suit the discrete circumstances and back it with thorough theoretical analysis.

\section{Problem Formulation}

\subsection{Residue Graph by Protein Backbone}
A residue graph, denoted as $\gG=(\mX, \mA, \mE)$, aims to delineate the geometric configuration of a protein. Specifically, every node stands for an AA within the protein. Correspondingly, each node is assigned a collection of meticulously curated node attributes $\mX$ to reflect its physiochemical and topological attributes. 
The local environment of a given node is defined by its spatial neighbors, as determined by the $k$-nearest neighbor ($k$NN) algorithm. 
Consequently, each AA node is linked to a maximum of $k$ other nodes within the graph, specifically those with the least Euclidean distance amongst all nodes within a 30\AA~contact region. 
The edge attributes, represented as $\mE\in\R^{93}$, illustrate the relationships between connected nodes. These relationships are determined through parameters such as inter-atomic distances, local N-C positions, and a sequential position encoding scheme. We detail the attribute construction in Appendix~\ref{sec:app:proteinGraph}.

\subsection{Inverse Folding as a Denoising Problem}
The objective of inverse folding is to engineer sequences that can fold to a pre-specified desired structure. We utilize the coordinates of C$\alpha$ atoms to represent the 3D positions of AAs in Euclidean space, thereby embodying the protein backbone. Based on the naturally existing protein structures, our model is constructed to generate a protein's native sequence based on the coordinates of its backbone atoms. Formally we represent this problem as learning the conditional distribution $p(\mX^{\rm aa}|\mX^{\rm pos})$. Given a protein of length $n$ and a sequence of spatial coordinates $\mX^{\rm pos}=\{\vx^{\rm pos}_1,\dots,\vx^{\rm pos}_i,\dots,\vx^{\rm pos}_{n}\}$ representing each of the backbone C$\alpha$ atoms in the structure, the target is to predict $\mX^{\rm aa}=\{\vx^{\rm aa}_1,\dots,\vx^{\rm aa}_i,\dots,\vx^{\rm aa}_n\}$, the native sequence of AAs. This density is modeled in conjunction with the other AAs along the entire chain. Our model is trained by minimizing the negative log-likelihood of the generated AA sequence relative to the native wild-type sequence. Sequences can then be designed either by sampling or by identifying sequences that maximize the conditional probability given the desired secondary and tertiary structure.

\subsection{Discrete Denoising Diffusion Probabilistic Models}
Diffusion models belong to the class of generative models, where the training stage encompasses diffusion and denoising processes. The diffusion process $q\left(\vx_1,\dots,\vx_{T} \mid \vx_0 \right)=\prod_{t=1}^T q\left(\vx_{t} \mid \vx_{t-1}\right)$ corrupts the original data $\vx_0 \sim$ $q\left(\vx\right)$ into a series of latent variables $\{\vx_1,\dots,\vx_{T}\}$, with each carrying progressively higher levels of noise. 
Inversely, the denoising process $p_\theta\left(\vx_0,\vx_1,...,\vx_T\right)=p\left(\vx_T\right) \prod_{t=1}^T p_\theta\left(\vx_{t-1} \mid \vx_t\right)$ gradually reduces the noise within these latent variables, steering them back towards the original data distribution. The iterative denoising procedure is driven by a differentiable operator, such as a trainable neural network.

While in theory there is no strict form for $q\left(\vx_{t} \mid \vx_{t-1}\right)$ to take, several conditions are required to be fulfilled by $p_\theta$ for efficient sampling: (i) The diffusion kernel $q(\vx_{t}|\vx_0)$ requires a closed form to sample noisy data at different time steps for parallel training. (ii) The kernel should possess a tractable formulation for the posterior $q\left(\vx_{t-1} \mid \vx_{t}, \vx_0\right)$. Consequently, the posterior $p_\theta(\vx_{t-1} | \vx_{t}) = \int q\left(\vx_{t-1} \mid \vx_{t}, \vx_0\right) {\rm d} p_\theta(\vx_0|\vx_{t})$, and $\vx_0$ can be used as the target of the trainable neural network.
(iii) The marginal distribution $q(\vx_{T})$ should be independent of $\vx_0$. This independence allows us to employ $q(\vx_{T})$ as a prior distribution for inference.

The aforementioned criteria are crucial for the development of suitable noise-adding modules and training pipelines. To satisfy these prerequisites, we follow the setting in previous work \cite{austin2021structured}. For categorical data $\vx_t \in \{1,...,K\}$, the transition probabilities are calculated by the matrix $\left[\mQ_{t}\right]_{ij}=$ $q\left(\vx_t=j \mid \vx_{t-1}=i\right)$. 
Employing the transition matrix and on one-hot encoded categorical feature $\vx_t$, we can define the transitional kernel in the diffusion process by:
\begin{equation}
    q\left(\vx_{t} \mid \vx_{t-1}\right)=\vx_{t-1} \mQ_{t} \quad \text { and } \quad q\left(\vx_{t} \mid \vx\right)=\vx \bar{\mQ}_t,
\end{equation}
where $\bar{\mQ}_t=\mQ_1 \ldots \mQ_{t}$. 
The Bayes rule yields that the posterior distribution can be calculated in closed form as $q\left(\vx_{t-1} \mid \vx_{t}, \vx\right) \propto \vx_{t}\mQ_{t}^{\top} \odot \vx\bar{\mQ}_{t-1}$. The generative probability can thus be determined using the transition kernel, the model output at time $t$, and the state of the process $\vx_t$. Through iterative sampling, we eventually produce the generated output $\vx_0$.

The prior distribution $p(\vx_T)$ should be independent of the observation $\vx_0$. Consequently, the construction of the transition matrix necessitates the use of a noise schedule. The most straightforward and commonly utilized method is the uniform transition, which can be parameterized as $\mQ_t = \alpha_t \mI + (1 - \alpha_t)\1_d\1_d^{\top}/d$ with $\mI^{\top}$ be the transpose of the identity matrix $\mI$, $d$ refers to the number of amino acid types (\ie $d=20$) and $\1_d$ denotes the one vector of dimension $d$. As $t$ approaches infinity, $\alpha$ undergoes a progressive decay until it reaches $0$. Consequently, the distribution $q(\vx_T)$ asymptotically approaches a uniform distribution, which is essentially independent of $\vx$.

\section{Graph Denoising Diffusion for Inverse Protein Folding}

In this section, we introduce a discrete graph denoising diffusion model for protein inverse folding, which utilizes a given graph $\gG = \{\mX,\mA,\mE\}$ with node feature $\mX$ and edge feature $\mE$ as the condition. 
Specifically, the node feature depicts the AA position, AA type, and the spatial and biochemical properties $\mX = [\mX^{\rm pos},\mX^{\rm aa}, \mX^{\rm prop}]$.
We define a diffusion process on the AA feature $\mX^{\rm aa}$, and denoise it conditioned on the graph structure $\mE$ which is encoded by \emph{equivariant neural networks} \cite{satorras2021n}. Moreover, we incorporate protein-specific prior knowledge, including an \emph{AA substitution scoring matrix} and protein \emph{secondary structure} during modeling. We also introduce a new acceleration algorithm for the discrete diffusion generative process based on a transition matrix.

\subsection{Diffusion Process and Generative Denoising Process}

\paragraph{Diffusion Process}
To capture the distribution of AA types, we independently add noise to each AA node of the protein. For any given node, the transition probabilities are defined by the matrix $\mQ_{t}$.  With the predefined transition matrix, we can define the forward diffusion kernel by
$$
q\left(\mX^{\rm aa}_t \mid \mX^{\rm aa}_{t-1}\right)=\mX^{\rm aa}_{t-1} \mQ_{t} \quad \text { and } \quad q\left(\mX^{\rm aa}_t \mid \mX^{\rm aa}\right)=\mX^{\rm aa} \bar{\mQ}_t,
$$
where $\bar{\mQ}_t=\mQ_1 \ldots \mQ_{t}$ is the transition probability matrix up to step $t$. 

\paragraph{Training Denoising Networks}
The second component of the diffusion model is the denoising neural network $f_\theta$, parameterized by $\theta$. This network accepts a noisy input $\gG_t=\left(\mX_{t}, \mathbf{E}\right)$, where $\mX_{t}$ is the concatenation of the noisy AA types and other AA properties including 20 one-hot encoded AA type and 15 geometry properties, such as SASA, normalized surface-aware node features, dihedral angles of backbone atoms, and 3D positions.
It aims to predict the clean type of AA $\mX^{\rm aa}$, which allows us to model the underlying sequence diversity in the protein structure while maintaining their inherent structural constraints.   To train $f_\theta$, we optimize the cross-entropy loss $L$ between the predicted probabilities $\hat{p}(\mX^{\rm aa})$ for each node's AA type. 

\paragraph{Parameterized Generative Process}
A new AA sequence is generated through the reverse diffusion iterations on each node $\vx$.  The generative probability distribution $p_\theta(\vx_{t-1}|\vx_{t})$ is estimated from the predicted probability $\hat{p}(\vx^{\rm aa}|\vx_t)$ by the neural networks. We marginalize over the network predictions to compute for generative distribution at each iteration:
\begin{equation}
\label{eq:generativeProb}
  p_\theta\left(\vx_{t-1} \mid \vx_t\right) \propto\sum_{\hat{\vx}^{\rm aa}}q(\vx_{t-1}|\vx_t,\vx^{\rm aa})\hat{p}_{\theta}(\vx^{\rm aa}|\vx_t),
\end{equation}
where the posterior 
\begin{equation}
\label{eq:posteriorDist}
   q\left(\vx_{t-1} \mid  \vx_t,\vx^{\rm aa}\right) = \text{Cat}\left(\vx_{t-1}\Big|\frac{\vx_tQ_t^{\top} \odot \vx^{\rm aa}\bar{Q}_{t-1}}{\vx^{\rm aa}\bar{Q}_t\vx_t^{\top}}\right)
\end{equation}
can be calculated from the transition matrix, state of node feature at step $t$ and AA type $\vx^{\rm aa}$. The $\vx^{\rm aa}$ is the sample of the denoising network prediction $\hat{p}(\vx^{\rm aa})$. 





\subsection{Prior Distribution from Protein Observations}

\subsubsection{Markov Transition Matrices}
The transition matrix serves as a guide for a discrete diffusion model, facilitating transitions between the states by providing the probability of moving from the current time step to the next. As it reflects the possibility from one AA type to another, this matrix plays a critical role in both the diffusion and generative processes. During the diffusion stage, the transition matrix is iteratively applied to the observed data, which evolves over time due to inherent noise. As diffusion time increases, the probability of the original AA type gradually decays, eventually converging towards a uniform distribution across all AA types. In the generative stage, the conditional probability $p(\vx_{t-1}|\vx_t)$ is determined by both the model's prediction and the characteristics of the transition matrix $\mQ$, as described in \Eqref{eq:generativeProb}. 

Given the biological specificity of AA substitutions, the transition probabilities between AAs are not uniformly distributed, making it illogical to define random directions for the generative or sampling process. As an alternative, the diffusion process could reflect evolutionary pressures by utilizing substitution scoring matrices that conserve protein functionality, structure, or stability in wild-type protein families.
Formally, an \emph{AA substitution scoring matrix} quantifies the rates at which various AAs in proteins are substituted by other AAs over time \cite{trivedi2020substitution}. In this study, we employ the Blocks Substitution Matrix (BLOSUM) \cite{henikoff1992blosum}, which identifies conserved regions within proteins that are presumed to have greater functional relevance. Grounded in empirical observations of protein evolution, BLOSUM provides an estimate of the likelihood of substitutions between different AAs. We thus incorporate BLOSUM into both the diffusion and generative processes. Initially, the matrix is normalized into probabilities using the softmax function. Then, we use the normalized matrix $\mB$ with different probability temperatures to control the noise scale of the diffusion process. Consequently, the transition matrix at time $t$ is given by $\mQ_t = \mB^{T}$. By using this matrix to refine the transition probabilities, the generative space to be sampled is reduced effectively, thereby the model's predictions converge toward a meaningful subspace. See Figure~\ref{fig:blosumSampling} for a comparison of the transition matrix over time in random and \textsc{BLOSUM} cases.

\begin{figure}[t]
    \centering
    \includegraphics[width=\textwidth]{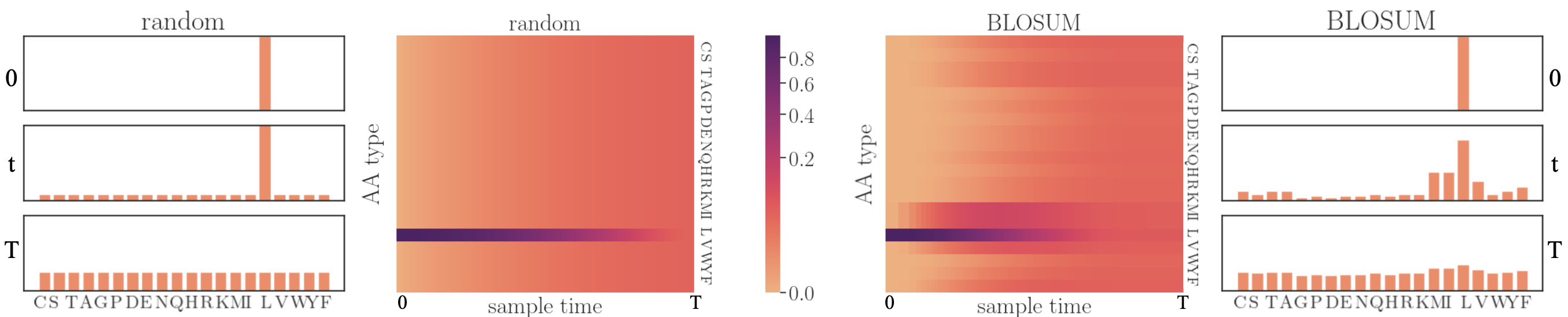}
    \caption{The middle two panels depict the transition probability of Leucine (L) from $t=0$ to $T$. Both the uniform and BLOSUM start as Dirichlet distributions and become uniform at time $T$. As shown in the two side figures, while the uniform matrix evenly disperses L's probability to other AAs over time, BLOSUM favors AAs similar to L. }
    \label{fig:blosumSampling}
    \vspace{-0.6cm}
\end{figure}

\subsubsection{Secondary Structure}
Protein secondary structure refers to the local spatial arrangement of AA residues in a protein chain. The two most common types of protein secondary structure are alpha helices and beta sheets, which are stabilized by hydrogen bonds between backbone atoms. The secondary structure of a protein serves as a critical intermediary, bridging the gap between the AA sequence and the overall 3D conformation of the protein. In our study, we incorporate eight distinct types of secondary structures into AA nodes as conditions during the sampling process. This strategic approach effectively narrows down the exploration space of potential AA sequences. Specifically, we employ DSSP (Define Secondary Structure of Proteins) to predict the secondary structures of each AA and represent these structures using one-hot encoding. Our neural network takes the one-hot encoding as input and utilizes it to denoise the AA conditioned on it.

The imposition of motif conditions such as alpha helices and beta sheets on the search for AA sequences not only leads to a significant reduction in the sampling space of potential sequences, but also imparts biological implications for the generated protein sequence. By conditioning the sampling process of AA types on their corresponding secondary structure types, we guide the resulting protein sequence towards acquiring not only the appropriate 3D structure with feasible thermal stability but also the capability to perform its intended function.

\subsection{Equivariant Graph Denoising Network}
Bio-molecules such as proteins and chemical compounds are structured in the 3-dimensional space, and it is vital for the model to predict the same binding complex no matter how the input proteins are positioned and oriented to encode a robust and expressive hidden representation. This property can be guaranteed by the rotation equivariance of the neural networks. A typical choice of such a network is an equivariant graph neural network \cite{satorras2021n}.
We modify its SE(3)-equivariant neural layers to update representations for both nodes and edges, which reserves SO(3) rotation equivariance and E(3) translation invariance. At the $l$th layer, an Equivariant Graph Convolution (\textsc{EGC}) inputs a set of $n$ hidden node embeddings 
$\mH^{(l)}=\left\{\vh_1^{(l)},\dots,\vh_n^{(l)}\right\}$ describing AA type and geometry properties, edge embedding $\vm_{ij}^{(l)}$ with respect to connected nodes $i$ and $j$, $\mX^{\rm pos}=\left\{\vx_1^{\rm pos}, \dots, \vx_n^{\rm pos}\right\}$ for node coordinates and $t$ for time step embedding of diffusion model. The target of a modified \textsc{EGC} layer is to update hidden representations $\mH^{(l+1)}$ for nodes and $\mM^{(l+1)}$ for edges. Concisely, $\mH^{(l+1)}, \mM^{(l+1)}=\operatorname{EGC}\left[\mH^{(l)}, \mX^{\text{pos}}, \mM^{(l)}, t \right]$. To achieve this, an \textsc{EGC} layer defines 
\begin{equation}
\label{eq:egnn}
    \begin{aligned}
    \vm_{ij}^{(l+1)} &=\phi_{e}\left(\mathbf{h}_{i}^{(l)}, \mathbf{h}_{j}^{(l)},\left\|\mathbf{x}_{i}^{(l)}-\mathbf{x}_{j}^{(l)}\right\|^{2}, \vm^{(l)}_{ij}\right) \\
    \vx_{i}^{(l+1)} &=\mathbf{x}_{i}^{(l)}+\frac1{n}\sum_{j \neq i}\left(\mathbf{x}_{i}^{(l)}-\mathbf{x}_{j}^{(l)}\right) \phi_{x}\left(\mathbf{m}_{i j}^{(l+1)}\right) \\
    \vh_{i}^{(l+1)} &=\phi_{h}\big(\mathbf{h}_{i}^{(l)}, \sum_{j \neq i} \mathbf{m}_{i j}^{(l+1)}\big),
\end{aligned}
\end{equation}
where $\phi_e, \phi_h$ are the edge and node propagation operations, respectively. The $\phi_x$ is an additional operation that projects the vector edge embedding $\vm_{ij}$ to a scalar. 
The modified EGC layer preserves equivariance to rotations and translations on the set of 3D node coordinates $\mX^{\text{pos}}$ and performs invariance to permutations on the nodes set identical to any other GNNs.

\subsection{DDIM Sampling Process}
A significant drawback of diffusion models lies in the speed of generation process, which is typically characterized by numerous incremental steps and can be quite slow. Deterministic Denoising Implicit Models (DDIM) \cite{songdenoising} are frequently utilized to counter this issue in continuous variable diffusion generative models. DDIM operates on a non-Markovian forward diffusion process, consistently conditioning on the input rather than the previous step. By setting the noise variance on each step to $0$, the reverse generative process becomes entirely deterministic, given an initial prior sample.

Similarly, since we possess the closed form of generative probability $p_{\theta}(\vx_{t-1}|\vx_t)$ in terms of a predicted $\vx^{\rm aa}$ and the posterior distribution $p(\vx_{t-1}|\vx_t,\vx^{\rm aa})$, we can also render the generative model deterministic by controlling the sampling temperature of $p(\vx^{\rm aa}|\vx_t)$. Consequently, we can define the multi-step generative process by 
\begin{equation}
\label{ddim}
\begin{aligned}
p_\theta\left(\vx_{t-k} \mid \vx_t\right)
\propto (\sum_{\hat{\vx}^{\rm aa}}q(\vx_{t-k}|\vx_t,\vx^{\rm aa})\hat{p}(\vx^{\rm aa}|\vx_t))^T    
\end{aligned}
\end{equation}

where the temperature $T$ controls whether it is deterministic or stochastic, and the multi-step posterior distribution is
\begin{equation}
\label{eq:ddimPosterior}
   q\left(\vx_{t-k} \mid  \vx_t,\vx^{\rm aa}\right) = \text{Cat}\left(\vx_{t-k}\Big|\frac{\vx_t Q_t^{\top}\cdots Q_{t-k}^{\top } \odot \vx^{\rm aa}\bar{Q}_{t-k}}{\vx^{\rm aa}\bar{Q}_t\vx_t^{\top}}\right). 
\end{equation}


\section{Experiments}
We validate our \ddif~on recovering native protein sequences in \textbf{CATH} \cite{ORENGO1997cath}. The performance is mainly compared with structure-aware SOTA models. The implementations for the main algorithms (see Appendix~\ref{sec:app:algorithm}) at \url{https://github.com/ykiiiiii/GraDe_IF} are programmed with \texttt{PyTorch-Geometric} (ver 2.2.0) and \texttt{PyTorch} (ver 1.12.1) and executed on an NVIDIA$^{\circledR}$ Tesla V100 GPU with $5,120$ CUDA cores and $32$GB HBM2 installed on an HPC cluster.

\subsection{Experimental Protocol}
\paragraph{Training Setup}
We employ \textbf{CATH v4.2.0}-based partitioning as conducted by \textsc{GraphTrans} \cite{ingraham2019generative} and GVP \cite{jing2021gvp}. Proteins are categorized based on \textbf{CATH} topology classification, leading to a division of $18,024$ proteins for training, $608$ for validation, and $1,120$ for testing. To evaluate the generative quality of different proteins, we test our model across three distinct categories: \textit{short}, \textit{single-chain}, and \textit{all} proteins. The short category includes proteins with sequence lengths shorter than 100. The single-chain category encompasses proteins composed of a single chain. In addition, the total time step of the diffusion model is configured as $500$, adhering to a cosine schedule for noise \cite{nichol2021improved}. For the denoising network, we implement six stacked EGNN blocks, each possessing a hidden dimension of $128$. Our model undergoes training for default of $200$ epochs, making use of the Adam optimizer. A batch size of $64$ and a learning rate of $0.0005$ are applied during training. Moreover, to prevent overfitting, we incorporate a dropout rate of 0.1 into our model's architecture.

\paragraph{Evaluation Metric}
The quality of recovered protein sequences is quantified by \textit{perplexity} and \textit{recovery rate}. The former measures how well the model's predicted AA probabilities match the actual AA at each position in the sequence. A lower perplexity indicates a better fit of the model to the data. The recovery rate assesses the model's ability to recover the correct AA sequence given the protein's 3D structure. It is typically computed as the proportion of AAs in the predicted sequence that matches the original sequence. A higher recovery rate indicates a better capability of the model to predict the original sequence from the structure.

\begin{table}[!t]
\caption{Recovery rate performance of \textbf{CATH} on zero-shot models.}
\label{tab:rr}
\begin{center}
\resizebox{\textwidth}{!}{
    \begin{tabular}{ccccccccc}
    \toprule
    \multirow{2}{*}{\textbf{Model}} & \multicolumn{3}{c}{ \textbf{Perplexity} $\downarrow$} & \multicolumn{3}{c}{ \textbf{Recovery Rate} \% $\uparrow$} & \multicolumn{2}{c}{\textbf{CATH version}} \\\cmidrule(lr){2-4}\cmidrule(lr){5-7}\cmidrule(lr){8-9}
    & Short & Single-chain & All & Short & Single-chain & All & 4.2 & 4.3 \\
    \midrule 
    \textsc{StructGNN} \cite{ingraham2019generative} & 8.29 & 8.74 & 6.40 & 29.44 & 28.26 & 35.91 & $\checkmark$ & \\
    \textsc{GraphTrans} \cite{ingraham2019generative} & 8.39 & 8.83 & 6.63 & 28.14 & 28.46 & 35.82 & $\checkmark$ & \\
    GCA \cite{tan2022generative} & 7.09 & 7.49 & 6.05 & 32.62 & 31.10 & 37.64 & $\checkmark$ & \\
    GVP \cite{jing2021gvp} & 7.23 & 7.84 & 5.36 & 30.60 & 28.95 & 39.47 & $\checkmark$ & \\
    GVP-large \cite{hsu2022esmif1} & 7.68 & 6.12 & 6.17 & 32.6 & 39.4 & 39.2 & & $\checkmark$ \\
    \textsc{AlphaDesign} \cite{gao2022alphadesign} & 7.32 & 7.63 & 6.30 & 34.16 & 32.66 & 41.31 & $\checkmark$ & \\
    \textsc{ESM-if1} \cite{hsu2022esmif1} & 8.18 & 6.33 & 6.44 & 31.3 & 38.5 & 38.3 & & $\checkmark$ \\
    \textsc{ProteinMPNN} \cite{dauparas2022robust} & 6.21 & 6.68 & 4.57 & 36.35 & 34.43 & 49.87 & $\checkmark$ & \\
    \textsc{PiFold} \cite{gao2023pifold} & $6.04$ & 6.31 & $4 . 5 5$ & $39.84$ & 38.53 & 51.66 & $\checkmark$ & \\
    \midrule
    \ddif &$\mathbf{5.49}$&$\mathbf{6.21}$&$\mathbf{4.35}$&$\mathbf{45.27}$&$\mathbf{42.77}$&$\mathbf{52.21}$&$\checkmark$\\
    \bottomrule
    \end{tabular} 
}
\end{center}
\end{table}

\begin{figure}[!t]
    \centering
    \includegraphics[width=\textwidth]{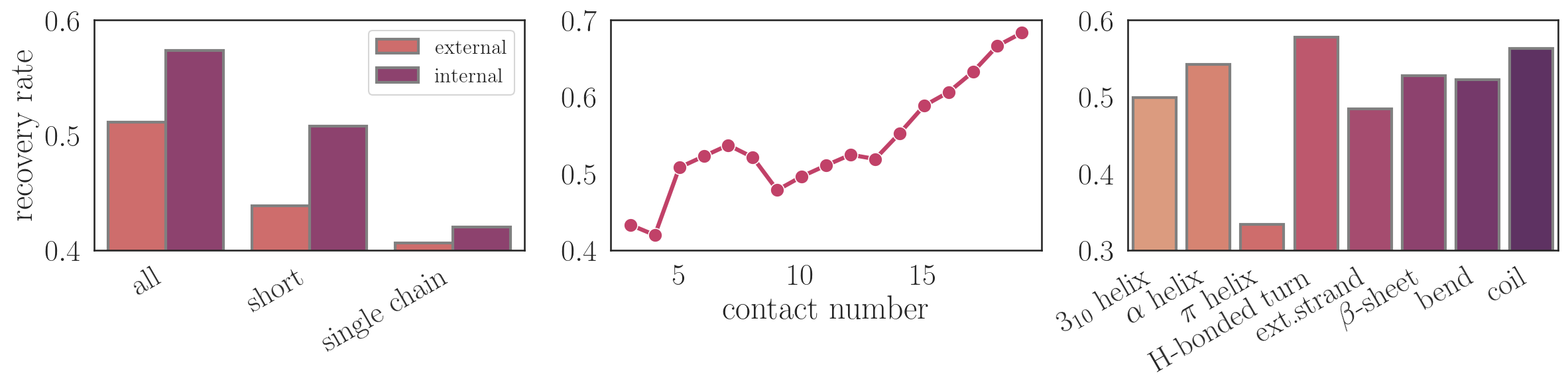}
    \vspace{-8mm}
    \caption{Recovery rate on core and surface residues and different secondary structure }
    \label{fig:recoverRate}
    \vspace{4mm}
\end{figure}

\subsection{Inverse Folding}\label{sec:inversen folding}
Table~\ref{tab:rr} compares \ddif's performance on recovering proteins in \textbf{CATH}, with the last two columns indicating the training dataset of each baseline method. To generate high-confidence sequences, \ddif~integrates out uncertainties in the prior by approximating the probability $p(\vx^{\rm aa}) \approx \sum_{i=1}^N p(\vx^{\rm aa}|\vx^i_T)p(\vx^i_T)$. Notably, we observed an improvement of $4.2\%$ and $5.4\%$ in the recovery rate for single-chain proteins and short sequences, respectively. We also conducted evaluations on different datasets (Appendix~\ref{sec:app:ts50500}) and ablation conditions (Appendix~\ref{sec:app:ablation}).

Upon subdividing the recovery performance based on buried and surface AAs, we find that the more conserved core residues exhibit a higher native sequence recovery rate. In contrast, the active surface AAs demonstrate a lower sequence recovery rate. Figure~\ref{fig:recoverRate} examines AA conservation by Solvent Accessible Surface Area (SASA) (with SASA$<0.25$ indicating internal AAs) and contact number (with the number of neighboring AAs within 8 \AA~in 3D space) \cite{gong2017improving}. The recovery rate of internal residues significantly exceeds that of external residues across all three protein sequence classes, with the recovery rate increasing in conjunction with the contact number. We also present the recovery rate for different secondary structures, where we achieve high recovery for the majority of secondary structures, with the exception of a minor $5$-turn helix structure that occurs infrequently.

\subsection{Foldability}

\begin{table}[t]
\centering
\caption{Numerical comparison between generated sequence structure and the native structure.}
\label{tab:Foldability}
\resizebox{0.7\textwidth}{!}{
    \begin{tabular}{ccccc}
    \toprule
    Method & Success & TM score & avg pLDDT & avg RMSD\\
    \midrule
    \textsc{PiFOLD} & $85$ & $0.80 \pm 0.22$ & $0.84 \pm 0.15$ & $1.67 \pm 0.99$\\
    \textsc{ProteinMPNN} & $94$ & $0.86 \pm 0.16$ & $0.89 \pm 0.10$ & $1.36 \pm 0.81$\\
    \ddif & $94$ & $0.86 \pm 0.17$ & $0.86 \pm 0.08$ & $1.47 \pm 0.82$\\
    \bottomrule
    \end{tabular}
}
\end{table}

\begin{figure}[t]
    \centering
    \includegraphics[width=0.9\textwidth]{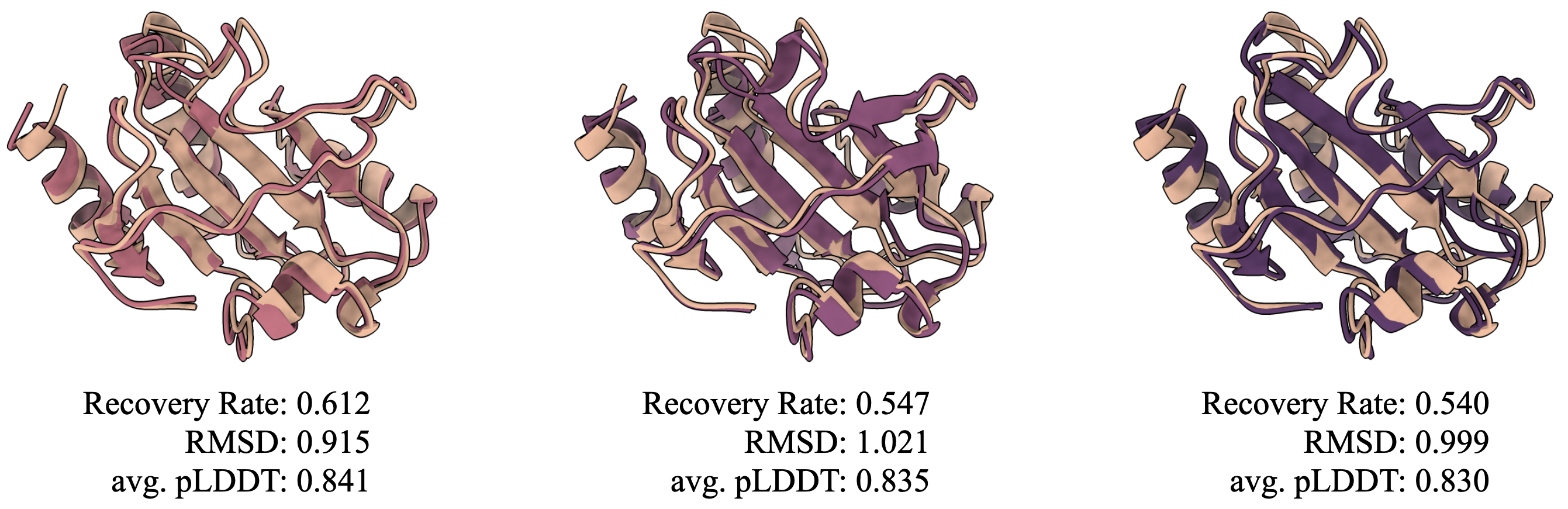}
    \caption{Folding comparison of \ddif-generated sequences and the native protein (in nude).}
    \vspace{-3mm}
    \label{fig:folding}
\end{figure}

We extend our investigation to the foldability of sequences generated at various sequence recovery rates. We fold generated protein sequences (by \ddif) with \textsc{AlphaFold2} and align them with the crystal structure to compare their closeness in Figure~\ref{fig:folding} (PDB ID: 3FKF). All generated sequences are nearly identical to the native one with their RMSD $\sim1$ \AA~over $139$ AAs, which is lower than the resolution of the crystal structure at $2.2$ \AA. We have also folded the native sequence by \textsc{AlphaFold2}, which yields an average pLDDT of $0.91$. In comparison, the average pLDDT scores of the generated sequences are $0.835$, underscoring the reliability of their folded structures. In conjunction with the evidence presented in Figure~\ref{fig:recoverRate} which indicates our method's superior performance in generating more identical results within conserved regions, we confidently posit that \ddif~can generate biologically plausible novel sequences for given protein structures (See Appendix~\ref{sec:app:af2}).

The numerical investigation is reported in Table~\ref{tab:Foldability}, where we pick the first $100$ structures (ordered in alphabetical order by their PDB ID) from the test dataset and compare the performance of \ddif~with \textsc{ProteinMPNN} and \textsc{PiFOLD}. We follow \cite{wu2022protein} and define the quality of a novel sequence by the TM score between its \textsc{AlphaFold2}-folded structure and the native structure, with an above-$0.5$ score indicating the design is \textit{successful}. Overall, our \ddif~exhibits high pLDDT, low RMSD, and high foldability. There are several proteins that face challenges with both \ddif~and baseline methods for folding with a high TM score, \ie 1BCT, 1BHA, and 1CYU, whose structural determination is based on NMR, an experimental technique that analyzes protein structure in a buffer solution. Due to the presence of multiple structures for a single protein in NMR studies, it is reasonable for folding tools to assign low foldability scores.

\begin{table}[t]
\centering
\caption{Numerical comparison on diversity and recovery rate}
\label{tab:diversity}
\resizebox{0.85\textwidth}{!}{
    \begin{tabular}{lcccccc}
    \toprule
    & \multicolumn{2}{c}{\textbf{low recovery rate}} & \multicolumn{2}{c}{\textbf{medium recovery rate}} & \multicolumn{2}{c}{\textbf{high recovery rate}} \\ \cmidrule(lr){2-3} \cmidrule(lr){4-5} \cmidrule(lr){6-7}
    \textbf{Method} & diversity & recovery & diversity & recovery & diversity & recovery \\
    \midrule
    \textsc{PiFold} & 0.37 & 0.47 & 0.25 & 0.50 & 0.21 & 0.50 \\
    \textsc{ProteinMPNN} & 0.51 & 0.42 & 0.27 & 0.46 & 0.26 & 0.52 \\
    \ddif & 0.61 & 0.33 & 0.54 & 0.47 & 0.25 & 0.53 \\
    \bottomrule
    \end{tabular}
}
\end{table}

\begin{figure}[t]
    \centering
    \includegraphics[width=\textwidth]{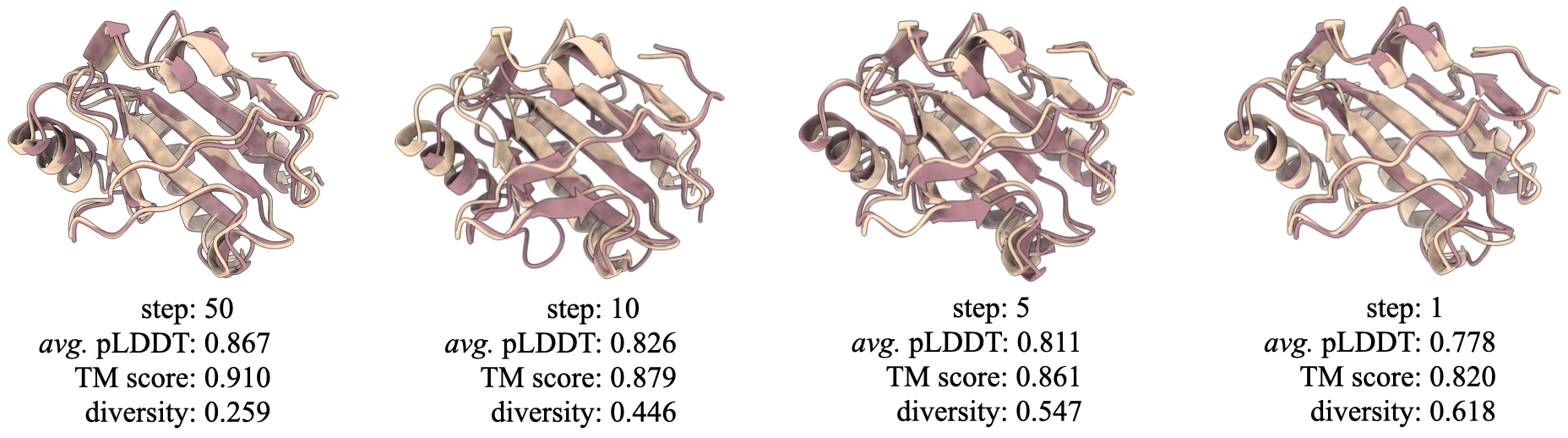}
    \caption{Folding structures of generated protein sequences with different steps. }
    \label{fig:diff_step}
\end{figure}

\subsection{Diversity}
Given the one-to-many mapping relationship between a protein structure and its potential sequences, an inverse folding model must be capable of generating diverse protein sequences for a fixed backbone structure. To investigate the diversity of \ddif~in comparison to baseline methods, we define \emph{diversity} as $1 - \text{self-similarity}$, where higher diversity indicates a model's proficiency in generating distinct sequences. We employ \textsc{PiFold}, \textsc{ProteinMPNN}, and \ddif~to generate new proteins for the test dataset at low, medium, and high recovery levels. We vary the temperature for the former two baseline methods within the range $\{0.5, 0.1, 0.0001\}$ and adjust the sample step for \ddif~from $\{1, 10, 50\}$. The average performance from sampling $10$ sequences is summarized in Table~\ref{tab:diversity}, revealing comparable results among the three models in both recovery rate and diversity. In general, increasing the probability temperature for \textsc{PiFold} and \textsc{ProteinMPNN} (or decreasing the sample step in \ddif) leads to a reduction in uncertainty and more reliable predictions, resulting in higher diversities and lower recovery rates for all three models. While all three methods demonstrate the ability to recover sequences at a similar level, particularly with high probability temperatures (or low sample steps), \ddif~produces significantly more diverse results when the step size is minimized. Our findings demonstrate that \ddif~can generate sequences with a $60\%$ difference in each sample, whereas \textsc{PiFold} and \textsc{ProteinMPNN} achieve diversity rates below $50\%$.

We next explore the foldability of these highly diverse protein sequences designed by \ddif. Figure~\ref{fig:diff_step} compares the generated proteins (folded by \textsc{AlphaFold2}) with the crystal structure (PDB ID: 3FKF). We vary the step size over a range of ${1, 5, 10, 50}$, generating $10$ sequences per step size to calculate the average pLDDT, TM score, and diversity. For simplicity, we visualize the first structure in the figure. Decreasing the step size results in the generation of more diverse sequences. Consequently, there is a slight reduction in both pLDDT and TM scores. However, they consistently remain at a considerably high level, with both metrics approaching $0.8$. This reduction can, in part, be attributed to the increased diversity of the sequences, as \textsc{AlphaFold2} heavily relies on the MSA sequences. It is expected that more dissimilar sequences would produce a more diverse MSA. Remarkably, when step$=1$, the sequence diversity exceeds $0.6$, indicating that the generated sequences share an approximately $0.3$ sequence similarity compared to the wild-type template protein sequence. This suggests the generation of protein sequences from a substantially distinct protein family when both pLDDT and TM scores continue to exhibit a high degree of confidence.

\section{Related Work}
\paragraph{Deep Learning models for protein sequence design} 
Self-supervised models have emerged as a pivotal tool in the field of computational biology, providing a robust method for training extensive protein sequences for representation learning. These models are typically divided into two categories: structure-based generative models and sequence-based generative models.
The former approaches protein design by formulating the problem of fixed-backbone protein design as a conditional sequence generation problem. They predict node labels, which represent AA types, with invariant or equivariant graph neural networks  \cite{anand2022protein,hsu2022esmif1,ingraham2019generative,jing2021gvp,strokach2020fast,tan2023multi}. Alternatively, the latter sequence-based generative models draw parallels between protein sequences and natural language processing. They employ attention-based methods to infer residue-wise relationships within the protein structure. These methods typically recover protein sequences autoregressively conditioned on the last inferred AA \cite{madani2023progen,notin2022tranception,shin2021protein}, or employing a BERT-style generative framework with masked language modeling objectives and enable the model to predict missing or masked parts of the protein sequence \cite{lin2023esm2,meier2021esm1v,rives2021esm1b,vig2021bertology}.

\paragraph{Denoising Diffusion models}
The Diffusion Generative Model, initially introduced by Sohl-Dickstein \textit{et al.} \cite{sohl2015deep} and further developed by Ho \textit{et al.} \cite{ho2020denoising}, has emerged as a potent instrument for a myriad of generative tasks in continuous time spaces. Its applications span diverse domains, from image synthesis \cite{rombach2022high} to audio generation \cite{yang2023diffsound}, and it has also found utility in the creation of high-quality animations \cite{ho2022video}, the generation of realistic 3D objects \cite{luo2021diffusion}, and drug design \cite{corso2023diffdock,trippe2023diffusion}.
Discrete adaptations of the diffusion model, on the other hand, have demonstrated efficacy in a variety of contexts, including but not limited to, text generation \cite{austin2021structured}, image segmentation \cite{hoogeboom2021argmax}, and graph generation \cite{hoogeboom2022equivariant,vignac2023digress}. Two distinct strategies have been proposed to establish a discrete variable diffusion process. The first approach involves the transformation of categorical data into a continuous space and then applying Gaussian diffusion \cite{chen2023analog,hoogeboom2022equivariant}. The alternative strategy is to define the diffusion process directly on the categorical data, an approach notably utilized in developing the D3PM model for text generation \cite{austin2021structured}. D3PM has been further extended to graph generation, facilitating the joint generation of node features and graph structure \cite{vignac2023digress}.

\section{Conclusion}
Deep learning approaches have striven to address a multitude of critical issues in bioengineering, such as protein folding, rigid-body docking, and property prediction. However, only a few methods have successfully generated diverse sequences for fixed backbones. In this study, we offered a viable solution by developing a denoising diffusion model to generate plausible protein sequences for a predetermined backbone structure. Our method, referred to as \ddif, leverages substitution matrices for both diffusion and sampling processes, thereby exploring a practical search space for defining proteins. The iterative denoising process is predicated on the protein backbone revealing both the secondary and tertiary structure. The 3D geometry is analyzed by a modified equivariant graph neural network, which applies roto-translation equivariance to protein graphs without the necessity for intensive data augmentation. Given a protein backbone, our method successfully generated a diverse set of protein sequences, demonstrating a significant recovery rate. Importantly, these newly generated sequences are generally biologically meaningful, preserving more natural designs in the protein's conserved regions and demonstrating a high likelihood of folding back into a structure highly similar to the native protein. The design of novel proteins with desired structural and functional characteristics is of paramount importance in the biotechnology and pharmaceutical industries, where such proteins can serve diverse purposes, ranging from targeted drug delivery to enzyme design for industrial applications. Additionally, understanding how varied sequences can yield identical structures propels the exploration of protein folding principles, thereby helping to decipher the rules that govern protein folding and misfolding. Furthermore, resolving the inverse folding problem allows the identification of different sequences that fold into the same structure, shedding light on the evolutionary history of proteins by enhancing our understanding of how proteins have evolved and diversified over time while preserving their functions.

\newpage

\begin{ack}
We thank Lirong Zheng for providing insightful discussions on molecular biology. 
Bingxin Zhou acknowledges support from the National Natural Science Foundation of China (62302291).
Yu Guang Wang acknowledges support from
the National Natural Science Foundation of China (62172370), Shanghai Key Projects  (23440790200) and (2021SHZDZX0102). 
This project was undertaken with the assistance of computational resources and services from the National Computational Infrastructure (NCI), which is supported by the Australian Government.
\end{ack}

\bibliographystyle{plain}
\bibliography{reference}

\newpage
\appendix
\section{Broader Impact and Limitations}

\paragraph{Broader Impact}
We have developed a generative model rooted in the diffusion denoising paradigm, specifically tailored to the context of protein inverse folding. As with any other generative models, it is capable of generating \textit{de novo} content (protein sequences) under specified conditions (e.g., protein tertiary structure). While this method holds substantial potential for facilitating scientific research and biological discoveries, its misuse could pose potential risks to human society. For instance, in theory, it possesses the capacity to generate novel viral protein sequences with enhanced functionalities. To mitigate this potential risk, one approach could be to confine the training dataset for the model to proteins derived from prokaryotes and/or eukaryotes, thereby excluding viral proteins. Although this strategy may to some extent compromise the overall performance and generalizability of the trained model, it also curtails the risk of misuse of the model by limiting the understanding and analysis of viral protein construction.

\paragraph{Limitations}
The conditions imposed on the sampling process gently guide the generated protein sequences. However, in certain scenarios, stringent restrictions may be necessary to produce a functional protein. Secondary structure, as a living example, actively contributes to the protein's functionality. For instance, 
transmembrane $\alpha$-helices play essential roles in protein functions, such as passing ions or other molecules and transmitting a signal across the membrane. Moreover, the current zero-shot model is trained on a general protein database. For specific downstream applications, such as generating new sequences for a particular protein or protein family, it may necessitate the incorporation of auxiliary modules or the modification of training procedures to yield more fitting sequences.

\section{Non-Markovian Forward Process}
We give the derivation of posterior distribution $q\left(\vx_{t-1} \mid  \vx_t,\vx^{\rm aa}\right)$ for generative process from step to step. The proof relies on the Bayes rule, Markov property, and the pre-defined transition matrix for AAs.
\begin{prop}
\label{prop1}
For $q\left(\vx_{t-1} \mid  \vx_t,\vx^{\rm aa}\right)$ defined in Eq \ref{eq:posteriorDist}, we have 
\begin{equation*}
   q\left(\vx_{t-1} \mid  \vx_t,\vx^{\rm aa}\right) = \text{\rm Cat}\left(\vx_{t-1}\Big|\frac{\vx_tQ_t^{\top} \odot \vx^{\rm aa}\bar{Q}^{t-1}}{\vx^{\rm aa}\bar{Q}_t\vx_t^{\top}}\right).
\end{equation*}
\end{prop}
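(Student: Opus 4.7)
The plan is to obtain the stated closed form by applying Bayes' rule to the three-variable conditional $q(\vx_{t-1}\mid \vx_t,\vx^{\rm aa})$, then exploiting the Markov property of the forward diffusion together with the already-established marginal $q(\vx_t\mid \vx) = \vx\bar{\mQ}_t$. Concretely, I would first write
\begin{equation*}
q(\vx_{t-1}\mid \vx_t,\vx^{\rm aa}) \;=\; \frac{q(\vx_t\mid \vx_{t-1},\vx^{\rm aa})\,q(\vx_{t-1}\mid \vx^{\rm aa})}{q(\vx_t\mid \vx^{\rm aa})},
\end{equation*}
and then collapse $q(\vx_t\mid \vx_{t-1},\vx^{\rm aa})$ to $q(\vx_t\mid \vx_{t-1})$ since the forward kernel is Markov and $\vx^{\rm aa}=\vx_0$ drops out once the immediate predecessor is conditioned on.

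Next I would substitute the three known categorical forms given in the Discrete Denoising Diffusion section: $q(\vx_t\mid \vx_{t-1})=\vx_{t-1}\mQ_t$, $q(\vx_{t-1}\mid \vx^{\rm aa})=\vx^{\rm aa}\bar{\mQ}_{t-1}$, and $q(\vx_t\mid \vx^{\rm aa})=\vx^{\rm aa}\bar{\mQ}_t$. The key bookkeeping step is recognizing that, when we view the posterior as a categorical distribution in the unknown $\vx_{t-1}$, the factor $q(\vx_t\mid \vx_{t-1})$ must be rewritten so that $\vx_{t-1}$ appears as the free index. With the one-hot convention $[\vx_t]_j=1$ iff $\vx_t=j$, the scalar probability $q(\vx_t=j\mid \vx_{t-1}=i)=[\mQ_t]_{ij}$ becomes the $i$-th entry of the row vector $\vx_t\mQ_t^{\top}$. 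Similarly, the $i$-th entry of $\vx^{\rm aa}\bar{\mQ}_{t-1}$ is $q(\vx_{t-1}=i\mid \vx^{\rm aa})$.

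Taking the elementwise (Hadamard) product $\vx_t\mQ_t^{\top}\odot \vx^{\rm aa}\bar{\mQ}_{t-1}$ then assembles, in one vector, all unnormalized posterior masses $q(\vx_t\mid \vx_{t-1}=i)\,q(\vx_{t-1}=i\mid \vx^{\rm aa})$ across $i$. Dividing by the normalizing constant $q(\vx_t\mid \vx^{\rm aa})$, which written as a scalar is the bilinear form $\vx^{\rm aa}\bar{\mQ}_t\vx_t^{\top}$, yields precisely the displayed categorical probability vector, completing the argument.

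The only genuine subtlety—really the step worth writing out carefully—is the transpose on $\mQ_t$ and the Hadamard product. Readers can be misled by the fact that the forward kernel $\vx_{t-1}\mQ_t$ naturally produces a distribution over the \emph{next} state, whereas the posterior requires the conditional likelihood viewed as a function of the \emph{previous} state; this is exactly what $\vx_t\mQ_t^{\top}$ accomplishes. Once that identification is made, the remainder is a routine Bayes computation and no tractability issue arises because every factor is either a linear map of one-hot vectors or a precomputed power of the transition matrix.
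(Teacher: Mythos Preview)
Your proposal is correct and follows essentially the same approach as the paper: Bayes' rule, then the Markov property to drop $\vx^{\rm aa}$ from $q(\vx_t\mid\vx_{t-1},\vx^{\rm aa})$, then substitution of the closed-form marginals. Your direct identification of $\vx_t\mQ_t^{\top}$ as the likelihood viewed as a function of $\vx_{t-1}$ is in fact cleaner than the paper's route, which reaches the same expression via an extra application of Bayes' rule involving a prior $\pi(\vx_t)$ that then cancels; both arrive at the same final categorical form.
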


\begin{proof}
By Bayes rules, we can expand the original equation $q\left(\vx_{t-1} \mid  \vx_t,\vx^{\rm aa}\right)$ to 
\begin{equation*}
    q\left(\vx_{t-1} \mid \vx_t, \vx^{\rm aa}\right)=\frac{q\left(\vx_t \mid \vx_{t-1}, \vx^{\rm aa}\right) q\left(\vx_{t-1} \mid \vx^{\rm aa}\right)}{q\left(\vx_t \mid \vx^{\rm aa}\right)}=\frac{q\left(\vx_t \mid \vx_{t-1} \right) q\left(\vx_{t-1} \mid \vx^{\rm aa}\right)}{q\left(\vx_t \mid \vx^{\rm aa}\right)}.
\end{equation*}

As pre-defined diffusion process, we get $q\left(\vx_t \mid \vx^{\rm aa}\right)=\vx^{\rm aa}\bar{Q}_{t}$, and $q\left(\vx_{t-1} \mid \vx^{\rm aa}\right)=\vx^{\rm aa}\bar{Q}_{t-1}$.

For the term of $q\left(\vx_t \mid \vx_{t-1}, \vx^{\rm aa}\right)$ by Bayes rule and Markov property, we have 
$$q\left(\vx_t \mid \vx_{t-1}, \vx^{\rm aa}\right)=q\left(\vx_t \mid \vx_{t-1}\right) \propto q(\vx_{t-1} \mid \vx_t)\pi(\vx_{t}) \propto \vx_{t}Q_t^\top \odot\pi(\vx_{t})$$
where the normalizing constant is $\sum_{\vx_{t-1}}\vx_{t}Q_t^\top \odot \pi(\vx_{t})=(\vx_{t}\sum_{\vx_{t-1}} Q_t^{\top}) \odot \pi(\vx_{t})=\vx_{t} \odot\pi(\vx_{t})$

Then $q\left(\vx_t \mid \vx_{t-1}, \vx^{\rm aa}\right) = \frac{\vx_{t}Q_t^\top }{\vx_{t}}$, and the posterior distribution is:
\begin{equation*}
    q\left(\vx_{t-1} \mid \vx^{\rm aa}, \vx_t\right) = \text{\rm Cat}\left(\vx_{t-1}\Big|\frac{\vx_tQ^{\top}_t \odot \vx^{\rm aa}\bar{Q}_{t-1}}{\vx^{\rm aa}\bar{Q}_t\vx^{\top}_t}\right).
\end{equation*}
\end{proof}

The following gives the derivation for the discrete DDIM which accelerates the generative process.
\begin{prop}
For $q\left(\vx_{t-k} \mid  \vx_t,\vx^{\rm aa}\right)$ defined in Eq \ref{eq:ddimPosterior},   
\begin{equation*}
   q\left(\vx_{t-k} \mid  \vx_t,\vx^{\rm aa}\right) = \text{\rm Cat}\left(\vx_{t-k}\Big|\frac{\vx_t Q_t^{\top}\cdots Q_{t-k}^{\top } \odot \vx^{\rm aa}\bar{Q}_{t-k}}{\vx^{\rm aa}\bar{Q}_t\vx_t^{\top}}\right). 
\end{equation*}

\begin{proof}
By Bayes rules, we can expand the original equation $q\left(\vx_{t-k} \mid  \vx_t,\vx^{\rm aa}\right)$ to 
\begin{equation*}
    q\left(\vx_{t-k} \mid \vx_t, \vx^{\rm aa}\right)=\frac{q\left(\vx_t \mid \vx_{t-k}, \vx^{\rm aa}\right) q\left(\vx_{t-k} \mid \vx^{\rm aa}\right)}{q\left(\vx_t \mid \vx^{\rm aa}\right)}=\frac{q\left(\vx_t \mid \vx_{t-k} \right) q\left(\vx_{t-k} \mid \vx^{\rm aa}\right)}{q\left(\vx_t \mid \vx^{\rm aa}\right)}.
\end{equation*}

As pre-defined diffusion process, we get $q\left(\vx_t \mid \vx^{\rm aa}\right)=\vx^{\rm aa}\bar{Q}_{t}$, and $q\left(\vx_{t-1} \mid \vx^{\rm aa}\right)=\vx^{\rm aa}\bar{Q}_{t-k}$.

Similarly with $q\left(\vx_t \mid \vx_{t-1}, \vx^{\rm aa}\right)$ in Proposition~\ref{prop1},  $q\left(\vx_t \mid \vx_{t-k}, \vx^{\rm aa}\right) = \frac{\vx_{t}Q_t^{\top}\cdots Q_{t-k}^{\top } }{\vx_{t}}$ and the posterior is 
\begin{equation*}
    q\left(\vx_{t-k} \mid  \vx_t,\vx^{\rm aa}\right) = \text{\rm Cat}\left(\vx_{t-k}\Big|\frac{\vx_t Q_t^{\top}\cdots Q_{t-k}^{\top } \odot \vx^{\rm aa}\bar{Q}_{t-k}}{\vx^{\rm aa}\bar{Q}_t\vx_t^{\top}}\right).
\end{equation*}
\end{proof}
\end{prop}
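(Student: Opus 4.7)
The plan is to mirror the Bayes-rule argument used for Proposition~\ref{prop1}, but with the single-step transition replaced by the $k$-step composite transition. Concretely, I would first write
\begin{equation*}
q(\vx_{t-k}\mid \vx_t,\vx^{\rm aa}) = \frac{q(\vx_t\mid \vx_{t-k},\vx^{\rm aa})\,q(\vx_{t-k}\mid \vx^{\rm aa})}{q(\vx_t\mid \vx^{\rm aa})}.
\end{equation*}
The Markov property of the forward chain collapses $q(\vx_t\mid \vx_{t-k},\vx^{\rm aa})$ to $q(\vx_t\mid \vx_{t-k})$, which is the key step that lets the rest of the derivation go through just as in the one-step case.

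Next I would identify each of the three factors using the pre-defined transition kernel. The two marginals are immediate from the definition of $\bar{\mQ}_t$: namely $q(\vx_t\mid \vx^{\rm aa}) = \vx^{\rm aa}\bar{Q}_t$ and $q(\vx_{t-k}\mid \vx^{\rm aa}) = \vx^{\rm aa}\bar{Q}_{t-k}$, each evaluated at the indicated one-hot argument. For the composite forward factor, marginalizing out the intermediate states $\vx_{t-k+1},\dots,\vx_{t-1}$ yields the product $q(\vx_t\mid \vx_{t-k}) = \vx_{t-k}\,Q_{t-k+1}Q_{t-k+2}\cdots Q_t$. The same Bayes-flip trick used in the one-step proof then rewrites this as $\vx_t Q_t^{\top}\cdots Q_{t-k+1}^{\top}/\vx_t$ when we want to evaluate the transition probability as a function of the destination $\vx_{t-k}$ with $\vx_t$ fixed.

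Substituting these three pieces back into the Bayes expansion and collecting the $\vx_{t-k}$-dependent factors via a Hadamard product gives
\begin{equation*}
q(\vx_{t-k}\mid \vx_t,\vx^{\rm aa}) \propto \bigl(\vx_t Q_t^{\top}\cdots Q_{t-k+1}^{\top}\bigr) \odot \bigl(\vx^{\rm aa}\bar{Q}_{t-k}\bigr),
\end{equation*}
with the normalizing constant $\vx^{\rm aa}\bar{Q}_t\vx_t^{\top}$ coming from $q(\vx_t\mid \vx^{\rm aa})$. This matches the statement up to the indexing convention on the trailing transpose (the statement writes ``$Q_t^{\top}\cdots Q_{t-k}^{\top}$,'' which I read as the same $k$-fold product under the paper's range convention).

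The main obstacle is purely bookkeeping rather than conceptual: keeping the index ranges on the $Q^{\top}$ product consistent with the paper's convention, and justifying the Hadamard-product form cleanly for the $k$-step transition (the one-step proof used the identity $\sum_{\vx_{t-1}}\vx_t Q_t^{\top}=\vx_t$, and the analogous identity $\sum_{\vx_{t-k}}\vx_t Q_t^{\top}\cdots Q_{t-k+1}^{\top}=\vx_t$ must be invoked to verify the normalizer). Once that is sorted, the result follows by the identical chain of equalities as Proposition~\ref{prop1}, with ``$t-1$'' replaced by ``$t-k$'' throughout.
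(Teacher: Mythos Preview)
Your proposal is correct and follows essentially the same route as the paper: apply Bayes' rule, drop the $\vx^{\rm aa}$ conditioning via the Markov property, plug in the closed-form marginals $\vx^{\rm aa}\bar Q_t$ and $\vx^{\rm aa}\bar Q_{t-k}$, and handle the composite forward factor by the same ``Bayes-flip'' argument from Proposition~\ref{prop1} extended to a $k$-fold product of transition matrices. Your remark about the index range on the $Q^{\top}$ product is well taken; the paper's ``$Q_t^{\top}\cdots Q_{t-k}^{\top}$'' is indeed the same $k$-step product you wrote as $Q_t^{\top}\cdots Q_{t-k+1}^{\top}$, just under a slightly loose endpoint convention.
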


\section{Graph Representation of Folded Proteins}
\label{sec:app:proteinGraph}
The geometry of proteins suggests higher-level structures and topological relationships, which are vital to protein functionality. For a given protein, we create a $k$-nearest neighbor ($k$NN) graph $\gG=(\mX,\mE)$ to describe its physiochemical and geometric properties with nodes representing AAs by $\mX\in\mathbb{R}^{39}$ node attributes with $20$-dim AA type encoder, $16$-dim AA properties, and $3$-dim AA positions. The undirected edge connections are formulated via a $k$NN-graph with cutoff. In other words, each node is connected to up to $k$ other nodes in the graph that has the smallest Euclidean distance over other nodes and the distance is smaller than a certain cutoff (\eg $30$\AA). Edge attributes are defined for connected node pairs. For instance, if node $i$ and $j$ are connected to each other, their relationship will be described by $\mE_{ij}=\mE_{ji}\in\mathbb{R}^{93}$.

The AA types are one-hot encoded to $20$ binary values by $\mX^{\rm aa}$. On top of it, the properties of AAs and AAs' local environment are described by $\mX^{\rm prop}$, including the normalized crystallographic B-factor, solvent-accessible surface area (SASA), normalized surface-aware node features, dihedral angles of backbone atoms, and 3D positions. SASA measures the level of exposure of an AA to solvent in a protein by a scalar value, which provides an important indicator of active sites of proteins to locate whether a residue is on the surface of the protein. Both B-factor and SASA are standardized  with AA-wise mean and standard deviation on the associate attribute. Surface-aware features \cite{ganea2021independent} of an AA is non-linear projections to the weighted average distance of the central AA to its one-hop neighbors $i^{\prime} \in \mathcal{N}_i$, \ie 
\begin{equation*}
      \rho\left(\mathbf{x}_i ; \lambda\right)=\frac{\left\|\sum_{i^{\prime} \in \mathcal{N}_i} w_{i, i^{\prime}, \lambda}\left(\mX^{\text{pos},i}-\mX^{\text{pos},i^{\prime}}\right)\right\|}{\sum_{i^{\prime} \in \mathcal{N}_i} w_{i, i^{\prime}, \lambda}\left\|\mX^{\text{pos},i}-\mX^{\text{pos},i^{\prime}}\right\|}, 
\end{equation*}
where the weights are defined by
\begin{equation*}
    w_{i, i^{\prime}, \lambda}=\frac{\exp \left(-\left\|\mX_{\text{pos},i}-\mX_{\text{pos},i^{\prime}}\right\|^2 / \lambda\right)}{\sum_{i^{\prime} \in \mathcal{N}_i} \exp \left(-\left\|\mX_{\text{pos},i}-\mX_{\text{pos},i^{\prime}}\right\|^2 / \lambda\right)}
\end{equation*}
with $\lambda\in\{1,2,5,10,30\}$. The $\mX^{\text{pos},i}\in\mathbb{R}^3$ denotes the \emph{3D coordinates} of the $i$th residue, which is represented by the position of $\alpha$-carbon. We also use the backbone atom positions to define the spatial conformation of each AA in the protein chain with trigonometric values of dihedral angles $\{\sin, \cos\} \circ \{\phi_i, \psi_i,\omega_i\}$. 

Edge attributes $\mE\in\mathbb{R}^{93}$, on the other hand, include kernel-based distances, relative spatial positions, and relative sequential distances for pairwise distance characterization. For two connected residues $i$ and $j$, the kernel-based distance between them is projected by Gaussian radial basis functions (RBF) of $\exp\left\{\frac{\|\vx_j-\vx_i\|^2}{2\sigma^2_r}\right\}$ with $r=1,2,\dots,R$. 
A total number of $15$ distinct distance-based features are created with $\sigma_r=\{1.5^k\mid k=0,1,2,\dots,14\}$. Next, local frames \cite{ganea2021independent} are created from the corresponding residues' heavy atoms positions to define $12$ relative positions. They represent local fine-grained relations between AAs and the rigid property of how the two residues interact with each other. Finally, the residues' sequential relationship is encoded with $66$ binary features by their relative position $d_{i,j} = \vert s_i - s_j\vert$, where $s_i$ and $s_j$ are the absolute positions of the two nodes in the AA chain \cite{zhou2023accurate}. We further define a binary contact signal \cite{ingraham2019generative} to indicate whether two residues contact in the space, \ie the Euclidean distance $\|C\alpha_i-C\alpha_j\|<8$.

\section{Training and Inference}
\label{sec:app:algorithm}
In this section, we elucidate the training and inference methodologies implemented in the diffusion generative model. As shown in Algorithm \ref{algorith:Train}, training commences with a random sampling of a time scale $t$ from a uniform distribution between $1$ and $T$. Subsequently, we calculate the noise posterior and integrate noise as dictated by its respective distribution. We then utilize an equivariant graph neural network for denoising predictions, using both the noisy amino acid and other properties as node features, and leveraging the graph structure for geometric information. This results in the model outputting the denoised amino acid type. Ultimately, the cross-entropy loss is computed between the predicted and original amino acid types, providing a parameter for optimizing the neural network.

\begin{algorithm}[H]
\caption{Training}  
\label{algorith:Train}
\begin{algorithmic}[1]
    \State \textbf{Input}: A graph $\gG = \{\mX,\mE\}$
    \State Sample $t \sim \mathcal{U}(1,T)$
    \State Compute $q(\mX_t | \mX^{\rm aa}) =\mX^{\rm aa}\bar{Q}_t$
    \State Sample noisy $\mX_t \sim q(\mX_t|\mX^{\rm aa})$
    \State Forward pass: $\hat{p}(\mX^{\rm aa}) =f_{\theta}(\mX_t,\mE,t,ss)$ 
    \State Compute cross-entropy loss: $L = L_{\text{CE}}(\hat{p}(\mX^{\rm aa}),\mX)$
    \State Compute the gradient and optimize denoise network $f_{\theta}$
\end{algorithmic}  
\end{algorithm}

Upon completing the training, we are capable of sampling data using the neural network and the posterior distribution $p(\vx_{t-1}|\vx_t,\vx^{\rm aa})$. As delineated in the algorithm, we initially sample an amino acid uniformly from 20 classes, then employ our neural network to denoise $\mX^{\rm aa}$ from time $t$. From here, we can calculate the forward probability utilizing the model output and the posterior distribution. Through iterative processing, the ultimate model sample closely approximates the original data distribution. More importantly, we illustrate how to speed up the sampling procedure using DDIM in Algorithm~\ref{algorithm:ddim}. It can be regarded as skipping several steps in DDPM but with close performance (see Figure~\ref{fig:ddim} in Section~\ref{sec:inversen folding}). DDPM is a special case of DDIM when skipping step $k=1$.
We can also partially generate amino acids given some known AA within the structure. Analogous to an inpainting task, at each step, we can manually adjust the prediction at known positions to the known amino acid type, subsequently introducing noise, as illustrated in the Algorithm \ref{algorithm:inpainting}.

      

\begin{algorithm}[H]
\caption{Sampling (DDPM)}  
\label{algorithm:ddpm}
\begin{algorithmic}[1]
    \State Sample from uniformly prior $\mX_T \sim p(\mX_T)$
    \For {$t$ in $\{T,T-1,...,1\}$}
    \State Predict $\hat{p}(\mX^{\rm aa}|\,\mX_t)$ by neural network $\hat{p}(\mX^{\rm aa}|\,\mX_t) =f_{\theta}(\mX_t,\mE,t,ss)$
    \State Compute $p_{\theta}(\mX_{t-1}|\mX_t) = \sum_{\hat{\mX}^{\rm aa}}q(\mX_{t-1}|\mX_t,\hat{\mX}^{\rm aa})\hat{p}(\mX^{\rm aa}|\mX_t)$
    \State Sample $\mX_{t-1} \sim p_{\theta}(\mX_{t-1}|\mX_t)$
    
    \EndFor
    \State Sample $\mX^{\rm aa} \sim p_{\theta}(\mX^{\rm aa}|\mX_1)$
\end{algorithmic}
\end{algorithm}

\begin{algorithm}[H]
\caption{Sampling (DDIM)}  
\label{algorithm:ddim}
\begin{algorithmic}[1]
    \State Sample from uniformly prior $\mX_T \sim p(\mX_T)$
    \For {$t$ in $\{T,T-k,...,1\}$}
    \State Predict $\hat{p}(\mX^{\rm aa}|\,\mX_t)$ by neural network $\hat{p}(\mX^{\rm aa}|\,\mX_t) =f_{\theta}(\mX_t,\mE,t,ss)$
    \State Compute $p_{\theta}(\mX_{t-k}|\mX_t) = \sum_{\hat{\mX}^{\rm aa}}q(\mX_{t-k}|\mX_t,\hat{\mX}^{\rm aa})\hat{p}(\mX^{\rm aa}|\mX_t)$
    \State Sample $\mX_{t-k} \sim p_{\theta}(\mX_{t-k}|\mX_t)$
    
    \EndFor
    \State Sample $\mX^{\rm aa} \sim p_{\theta}(\mX^{\rm aa}|\mX_1)$
\end{algorithmic}
\end{algorithm}

\begin{algorithm}[H]
\caption{Partial Sampling}  
\label{algorithm:inpainting}
\begin{algorithmic}[1]
    \State Input Mask $M$ indicate which position is fixed
    \State Sample from uniformly prior $\mX_T \sim p(\mX_T)$
    \For {$t$ in $\{T,T-k,...,1\}$}
    \State Predict $\hat{p}(\mX^{\rm aa}|\,\mX_t)$ by neural network $\hat{p}(\mX^{\rm aa}|\,\mX_t) =f_{\theta}(\mX_t,\mE,t,ss)$
    \State Compute $\hat{p}(\mX^{\rm aa}|\,\mX_t) = p(\mX^{\rm aa}) \odot M + \hat{p}(\mX^{\rm aa}|\,\mX_t) \odot (1-M)$ 
    \State Compute $p_{\theta}(\mX_{t-k}|\mX_t) = \sum_{\hat{\mX}^{\rm aa}}q(\mX_{t-k}|\mX_t,\hat{\mX}^{\rm aa})\hat{p}(\mX^{\rm aa}|\mX_t)$
    \State Sample $\mX_{t-k} \sim p_{\theta}(\mX_{t-k}|\mX_t)$
    \EndFor
    \State Sample $\mX^{\rm aa} \sim p_{\theta}(\mX^{\rm aa}|\mX_1)$
\end{algorithmic}
\end{algorithm}

\section{Inverse Folding Performance on TS50 and T500}
\label{sec:app:ts50500}
In addition to the \textbf{CATH} dataset, we also evaluated our model using the \textbf{TS50} and \textbf{T500} datasets. These datasets were introduced by DenseCPD \cite{qi2020densecpd}, encompassing 9888 structures for training, and two distinct test datasets comprising 50 (TS50) and 500 (T500) test datasets, respectively. The same preprocessing steps applied to the \textbf{CATH} dataset were utilized here. The denoising network comprises six sequentially arranged EGNN blocks, each boasting a hidden dimension of $256$. Our model's performance, outlined in Table \ref{tab:rr on ts}, achieved an accuracy of $61.22\%$ on \textbf{T500}, and $56.32\%$ on \textbf{TS50}, respectively.
\begin{table}[H]
\caption{Recovery rate performance of \textbf{TS50} and \textbf{T500} on zero-shot models.}
\label{tab:rr on ts}
\begin{center}
\resizebox{0.8\textwidth}{!}{
    \begin{tabular}{ccccc}
    \toprule \multirow{2}{*}{\textbf{Model}} & \multicolumn{2}{c}{\textbf{TS50}} & \multicolumn{2}{c}{\textbf{T500}} \\\cmidrule(lr){2-3}\cmidrule(lr){4-5}
     & Perplexity $\downarrow$ & Recovery $\uparrow$ & Perplexity $\downarrow$ & Recovery $\uparrow$ \\
    \midrule 
    \textsc{StructGNN}  \cite{ingraham2019generative} & 5.40 & 43.89 & 4.98 & 45.69 \\
    \textsc{GraphTrans} \cite{ingraham2019generative}& 5.60 & 42.20 & 5.16 & 44.66 \\
    \textsc{GVP}  \cite{jing2021gvp}& 4.71 & 44.14 & 4.20 & 49.14 \\
    \textsc{GCA} \cite{tan2022generative}& 5.09 & 47.02 & 4.72 & 47.74 \\
    \textsc{AlphaDesign} \cite{gao2022alphadesign}& 5.25 & 48.36 & 4.93 & 49.23 \\
    \textsc{ProteinMPNN}  \cite{dauparas2022robust}& 3.93 & 54.43 & 3.53 & 58.08 \\
    \textsc{PiFold} \cite{gao2023pifold}& 3.86 & \textbf{58.72} & 3.44 & 60.42 \\
    \midrule
    \ddif (ours) & \textbf{3.71} & 56.32 & \textbf{3.23} & \textbf{61.22} \\
    \bottomrule
    \end{tabular}
}
\end{center}
\end{table}

\section{Ablation Study}
\label{sec:app:ablation}
We conducted ablation studies to assess the impact of various factors on our model's performance. These elements encompassed the selection of the transition matrix (uniform versus BLOSUM), the integration of secondary structure embeddings in the denoising procedure, and the function of the equivariant neural network. As demonstrated in Figure \ref{fig:recoverRate_SubMtr}, incorporating equivariance into the denoising neural network substantially enhances the model's performance. Given that the placement of protein structures in space can be arbitrary, considering symmetry in the denoising neural network helps to mitigate disturbances. Moreover, we found that including secondary structure as auxiliary information lessens uncertainty and improves recovery. Lastly, utilizing the BLOSUM matrix as the noise transition matrix boosted the recovery rate by 2\%, highlighting the benefits of infusing biological information into the diffusion and generative processes. This approach reduces sample variance and substantially benefits overall model performance.

\begin{figure}[!ht]
    \centering
    \includegraphics[width=\textwidth]{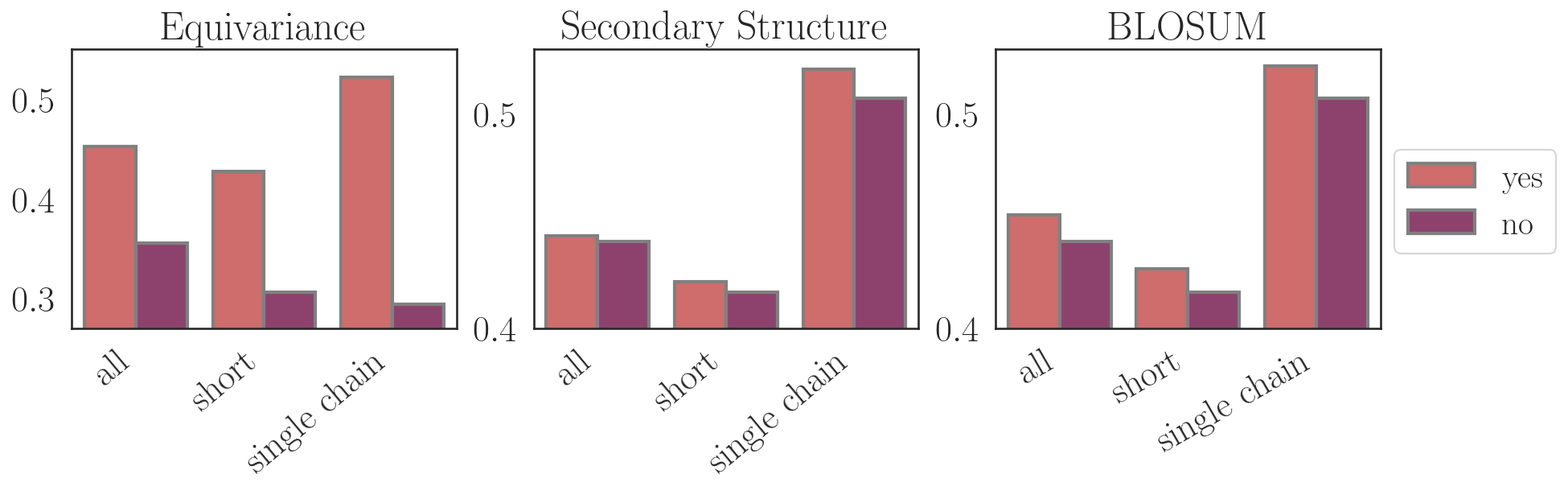}
    \caption{Recovery rate with the different selection of the transition matrix, whether considering equivariance and secondary structure.}
    \label{fig:recoverRate_SubMtr}
\end{figure}

In our sampling procedure, we accelerate the original DDPM sampling algorithm, which takes every step in the reverse sampling process, by implementing the discrete DDIM as per Equation \ref{eq:ddimPosterior}. This discrete DDIM allows us to skip every $k$ steps, resulting in a speed-up of the original DDPM by a factor of $k$. We conducted an ablation study on the impact of speed and recovery rate by trying different skip steps: $1, 2, 5, 10, 20, 25, 50, \text{and } 100$. We compare the recovery rates achieved by these different steps. Our results revealed that the recovery rate performance decays as the number of skipped steps increases. The best performance is achieved when skipping a single step, resulting in a recovery rate of $52.21\%$, but at a speed of $100$ times slower than when skipping $100$ steps, which yields a recovery rate of $47.66\%$. 

\begin{figure}[!ht]
    \centering
    \includegraphics[width=0.8\textwidth]{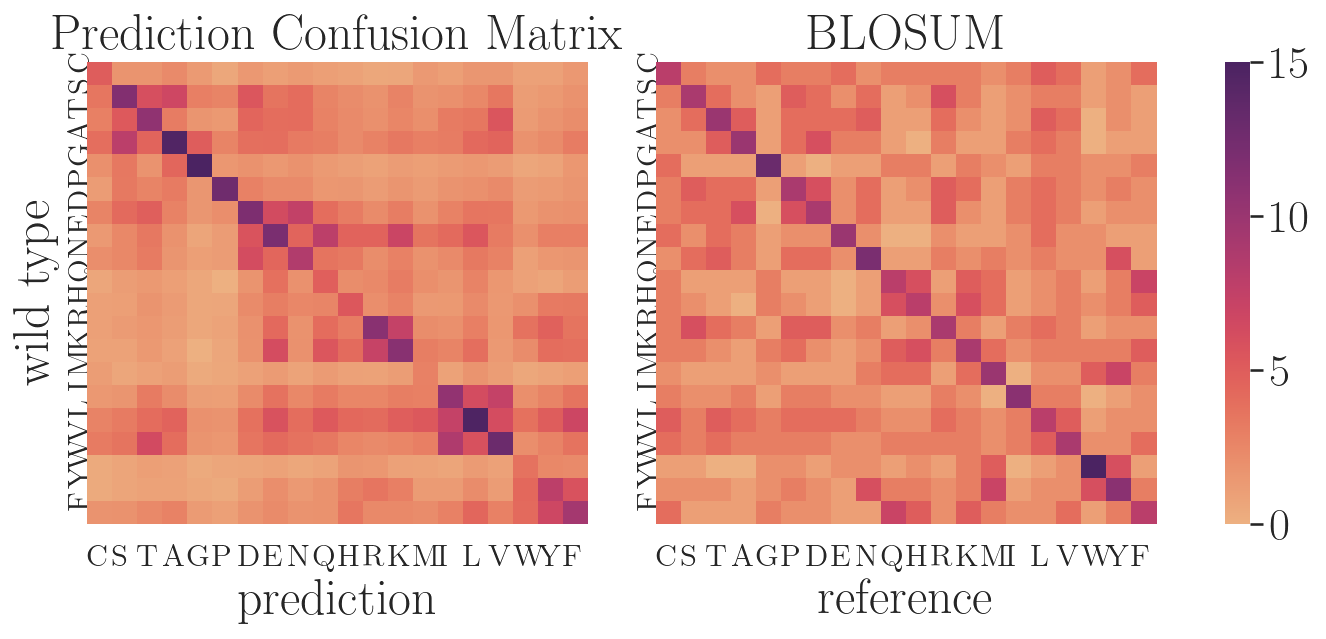}
    \caption{Comparison of the distribution of mutational prior (BLOSUM replacement matrix) and sampling results. }
\label{fig:blosum}
\end{figure}

We further compare the diversity of \ddif~with \textsc{PiFold} and \textsc{ProteinMPNN} in Figure~\ref{fig:diversity}. For a given backbone, we generate $100$ sequences with a self-similarity less than $50\%$ and employ t-SNE \cite{van2008visualizing} for projection into a $2$-dimensional space. At the same level of diversity, \ddif~ encompasses the wild-type sequence, whereas the other two methods fail to include the wild-type within their sample region. Furthermore, inspiring at a recovery rate threshold of $45\%$ for this protein, \ddif~ manages to generate a substantial number of samples, whereas the other two methods revert to deterministic results. This further substantiates the superiority of our model in terms of achieving sequence diversity and a high recovery rate concurrently.


\begin{figure}[H]
    \centering
    \begin{minipage}{0.9\textwidth}
        
        \includegraphics[width=\textwidth]{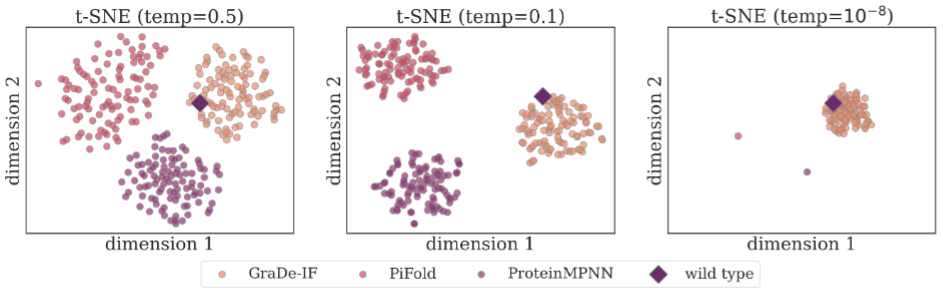}
        \caption{t-SNE of the generated sequences of \ddif~ compared to \textsc{PiFold} and \textsc{ProteinMPNN}.}
        \label{fig:diversity}
    \end{minipage}
\end{figure}

\section{Additional Folding Results}
\label{sec:app:af2}
We further analyzed the generated sequences by comparing different protein folding predictions. We consider the crystal structures of three native proteins with PDB IDs: \textbf{1ud9} (A chain), \textbf{2rem} (B chain), \textbf{3drn} (B chain), which we randomly choose from \textbf{CATH} dataset. For each structure, we generated three sequences from the diffusion model and used \textsc{AlphaFold 2} \cite{jumper2021alphafold} to predict the respective structures. As shown in Figure~\ref{fig:folding+}, these predictions (in \textcolor{antiquefuchsia}{purple}) 
were then compared with the structures of the native protein sequences (in \textcolor{antiquebrass}{nude}). We can observe that the RMSD for all cases is lower than the preparation accuracy of the wet experiment.
The results demonstrate that our model-generated sequences retain the core structure, indicating their fidelity to the original structures.

\begin{figure}[!ht]
    \centering
    \includegraphics[width=\textwidth]{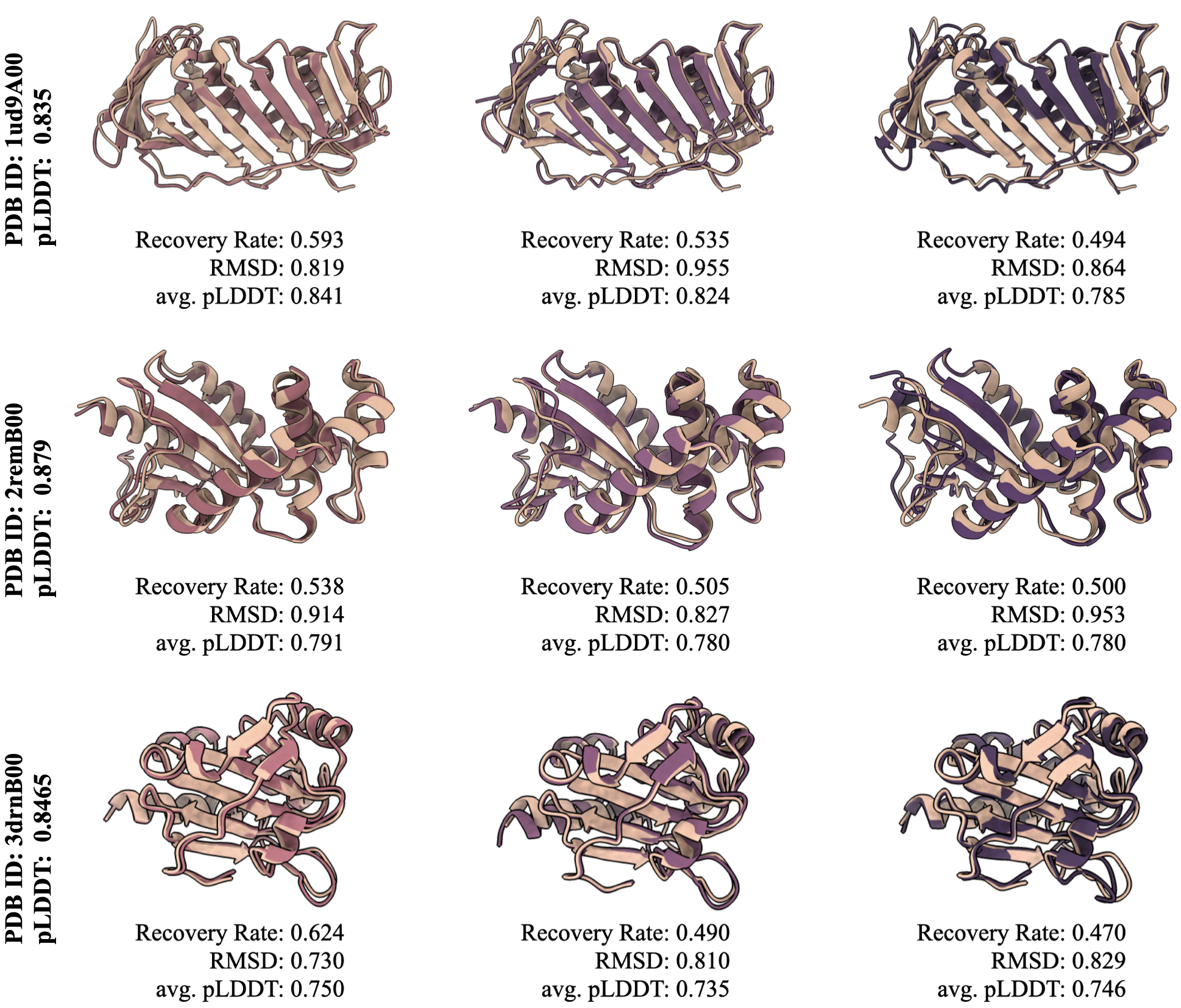}
    \caption{Folding comparisons between native sequence and generated sequence}
    \label{fig:folding+}
\end{figure}

\end{document}